\documentclass[reqno]{amsart}

\usepackage{hyperref}
\usepackage[foot]{amsaddr}
\usepackage[table]{xcolor}
\usepackage[maxbibnames=99]{biblatex}
\DeclareFieldFormat{note}{}
\DeclareFieldFormat[misc]{title}{{``#1''}}
\DeclareFieldFormat[misc]{titlecase}{\MakeSentenceCase*{#1}}
\DeclareFieldFormat[inproceedings]{titlecase}{\ifthenelse{\ifcurrentfield{booktitle}}{#1}{\MakeSentenceCase*{#1}}}
\renewbibmacro*{volume+number+eid}{%
  \printfield{volume}%
  \setunit*{\addcolon}%
  \printfield{number}%
  \setunit{\addcomma\space}%
  \printfield{eid}}
\usepackage{xparse}
\usepackage{subcaption}
\usepackage{tikz}
\usepackage[all]{xy}
\usepackage[small]{complexity}
\usepackage{colortbl,tabularx}
\usetikzlibrary{automata}
\usetikzlibrary{positioning}
\usetikzlibrary{calc}
\bibliography{journalsabbr,bib}
\usepackage{url}
\usepackage[capitalize]{cleveref}

\usepackage{amsmath,amssymb,stmaryrd,thmtools}
\newcommand{\resetdistance}{1.0cm}

\tikzset{gadget/.style={->,>=stealth,initial text=,minimum size=0pt,auto,on grid,scale=1,inner sep=1pt,node distance=2.8cm}}
\tikzset{every state/.style={minimum size=0pt,inner sep=1pt,fill=black!10,draw=black!70,thick}}

\newcommand{\gadgetmultiplication}{
\begin{tikzpicture}[gadget]
  \node[state] (q1) [initial] {$q_1$};
  \node[state] (q2) [right=of q1] {$q_{2}$};
  \node[state] (q3) [accepting by arrow,right=\resetdistance of q2] {$q_3$};
  \path (q1) edge [loop above] node {$\bar{a}\iinc^f$} (q1)
        (q1) edge  node {$\bar{\#}m_f\#$} (q2);
  \path (q2) edge [loop above] node {$\idec a$} (q2)
        (q2) edge node {$\ireset$} (q3);
\end{tikzpicture}
}

\newcommand{\gadgetinversemultiplication}{
\begin{tikzpicture}[gadget]
  \node[state] (q1) [initial] {$q_1$};
  \node[state] (q2) [right=of q1] {$q_{2}$};
  \node[state] (q3) [accepting by arrow, right=\resetdistance of q2] {$q_3$};
  \path (q1) edge [loop above] node {$\bar{a}^f\iinc$} (q1)
        (q1) edge  node {$\bar{\#}\bar{m}_f\#$} (q2);
  \path (q2) edge [loop above] node {$\idec a$} (q2)
        (q2) edge node {$\ireset$} (q3);
\end{tikzpicture}
}

\newcommand{\gadgetdivision}{
\begin{tikzpicture}[gadget]
  \node[state] (q1) [initial] {$q_1$};
  \node[state] (q2) [right=of q1] {$q_{2}$};
  \node[state] (q3) [accepting by arrow, right=\resetdistance of q2] {$q_3$};
  \path (q1) edge [loop above] node {$\bar{a}^f\iinc$} (q1)
        (q1) edge  node {$\bar{\#}d_f\#$} (q2);
  \path (q2) edge [loop above] node {$\idec a$} (q2)
        (q2) edge node {$\ireset$} (q3);
\end{tikzpicture}
}

\newcommand{\gadgetinversedivision}{
\begin{tikzpicture}[gadget]
  \node[state] (q1) [initial] {$q_1$};
  \node[state] (q2) [right=of q1] {$q_{2}$};
  \node[state] (q3) [accepting by arrow, right=\resetdistance of q2] {$q_3$};  
  \path (q1) edge [loop above] node {$\bar{a}\iinc^f$} (q1)
        (q1) edge  node {$\bar{\#}\bar{d}_f\#$} (q2);
  \path (q2) edge [loop above] node {$\idec a$} (q2)
        (q2) edge node {$\ireset$} (q3);
\end{tikzpicture}
}

\newcommand{\gadgettest}{
\begin{tikzpicture}[gadget]
  \node[state] (q1) [initial] {$q_1$};
  \node[state] (q2) [right=of q1] {$q_{2}$};
  \node[state] (q3) [accepting by arrow, right=\resetdistance of q2] {$q_3$};
  \path (q1) edge [loop above] node {$\bar{a}^f\iinc^f$} (q1)
        (q1) edge  node {$\bar{a}^g\iinc^g\bar{\#}t_f\#$} (q2);
  \path (q2) edge [loop above] node {$\idec a$} (q2)
        (q2) edge node {$\ireset$} (q3);
\end{tikzpicture}
}

\newcommand{\gadgetinversetest}{
\begin{tikzpicture}[gadget]
  \node[state] (r1) [initial] {$q_1$};
  \node[state] (r2) [right= of r1] {$q_2$};
  \node[state] (r3) [accepting by arrow, right=\resetdistance of r2] {$q_3$};

  \path (r1) edge [loop above] node {$\bar{a}^f\iinc^f$} (r1)
        (r1) edge node {$\bar{a}^g\iinc^g\bar{\#}\bar{t}_f\#$} (r2);
  \path (r2) edge [loop above] node {$\idec a$} (r2)
        (r2) edge node {$\ireset$} (r3);
\end{tikzpicture}
}

\newcommand{\extrastates}{
\begin{tikzpicture}[gadget, node distance=2.1cm]
  \node[state] (s') [initial] {$s'$};
  \node[state] (s) [right= of s'] {$s$};
  \node[state] (t) [right= of s] {$t$};
  \node[state] (b) [right= of t] {$b$};
  \node[state] (t') [accepting by arrow, right= 2.5cm of b] {$t'$};

  \node[ rectangle] (mf) [above=1cm of b] {$\ainvmult{f}$%
  };
  \node[ rectangle] (df) [above=0.7cm of mf] {$\ainvdiv{f}$%
  };
  \node[ rectangle] (tf) [above=0.7cm of df] {$\ainvtest{f}$%
  };

  \draw[dashed,rounded corners=6pt,color=black!70,thick] ($(s.west)+(-0.2,-0.5)$) rectangle ( $(t.east)+(0.2,0.5)$ ); 

  \path (s') edge node {$\bot \# a$} (s);
  \path (t) edge node {$\bar{a}\bar{\#}\#a$} (b);
  \path (b) edge node {$\bar{a}\bar{\#}\bar{\bot}$} (t');

  \path (b) edge [out=-225, in=-180] (mf.west);
  \path (b) edge [out=-225, in=-180] (df.west);
  \path (b) edge [out=-225, in=-180] (tf.west);
  
  \path (mf.east) edge [out=0, in=45] (b);
  \path (df.east) edge [out=0, in=45] (b);
  \path (tf.east) edge [out=0, in=45] (b);

\end{tikzpicture}
}

\def\keywords{\smallskip\noindent\textsc{Keywords.}~~}

\newcommand{\myqed}{}

\title[One-dimensional Pushdown VASS with Resets]{Coverability
  is Undecidable in One-dimensional Pushdown Vector Addition Systems with Resets}
\author{Sylvain Schmitz$^{1,2}$ \and Georg Zetzsche$^3$}
\address{$^1$~LSV, ENS Paris-Saclay \& CNRS
\\Universit\'e Paris-Saclay
\\France
}
\address{$^2$~IUF\\France}
\address{$^3$~Max Planck Institute for Software Systems (MPI-SWS)\\Germany}

\newcommand{\izero}{\mathsf{z}}
\newcommand{\ireset}{\mathsf{r}}
\newcommand{\itransfer}{\mathsf{t}}
\newcommand{\iinc}{\mathsf{\mathord{+}}}
\newcommand{\idec}{\mathsf{\mathord{-}}}

\newcommand{\eqdef}{\eqby{def}}
\newcommand{\eqby}[1]{\mathrel{\raisebox{-.1ex}{\ensuremath{\stackrel{\raisebox{-.25ex}{\scalebox{.45}{\upshape\textrm{#1}}}}{=}}}}}

\newcommand{\N}{\mathbb{N}}
\newcommand{\Nctr}{\N}
\newcommand{\NctrReset}{\N_{\ireset}}
\newcommand{\NctrTransfer}{\N_{\itransfer}}
\newcommand{\NctrZero}{\N_{\izero}}
\newcommand{\PD}{\mathsf{PD}}
\newcommand{\cM}{\mathcal{M}}
\newcommand{\cV}{\mathcal{V}}
\newcommand{\cA}{\mathcal{A}}

\DeclareDocumentCommand{\autstep}{O{}}{%
  \rightarrow_{#1}
}
\DeclareDocumentCommand{\autsteps}{O{}}{%
  \rightarrow^*_{#1}
}
\DeclareDocumentCommand{\autstepp}{O{} m}{%
  \xrightarrow{#2}_{#1}
}

\newcommand{\rdown}[1]{\overrightarrow{#1}}
\newcommand{\ldown}[1]{\overleftarrow{#1}}

\newtheorem{theorem}{Theorem}
\newtheorem{lemma}[theorem]{Lemma}

\Crefname{theorem}{Theorem}{Theorems}
\Crefname{lemma}{Lemma}{Lemmas}
\Crefname{fact}{Fact}{Facts}
\Crefname{proposition}{Proposition}{Propositions}

\begin{document}
\begin{abstract}
  We consider the model of pushdown vector addition systems with
  resets.  These consist of vector addition systems that have access
  to a pushdown stack and have instructions to reset counters. For
  this model, we study the coverability problem.  In the absence of
  resets, this problem is known to be decidable for one-dimensional
  pushdown vector addition systems, but decidability is open for
  general pushdown vector addition systems.  Moreover, coverability is
  known to be decidable for reset vector addition systems without a
  pushdown stack.  We show in this note that the problem is
  undecidable for one-dimensional pushdown vector addition systems
  with resets.

  \keywords Pushdown vector addition systems; decidability
\end{abstract}
\maketitle

\section{Introduction}

Vector addition systems with states (VASS) play a central role for
modelling systems that manipulate discrete resources, and as such
provide an algorithmic toolbox applicable in many different fields.
Adding a pushdown stack to vector addition systems yields so-called
\emph{pushdown VASS} (PVASS), which are even more versatile: one can
model for instance recursive programs with integer
variables~\cite{atig11} or distributed systems with a recursive server
and multiple finite-state clients, and PVASS can be related to
decidability issues in logics on data trees~\cite{lazic13}.  However,
this greater expressivity comes with a price: the \emph{coverability
  problem} for PVASS is only known to be decidable in dimension
one~\cite{leroux2015coverability}.  This problem captures most of the
decision problems of interest and in particular safety properties, and
is the stumbling block in a classification for a large family of
models combining pushdown stacks and counters~\cite{zetzsche15}.

Another viewpoint on one-dimensional
PVASS~\cite{leroux2015coverability} is to see those systems as
extensions of two-dimensional VASS, where one of the two counters is
replaced by a pushdown stack.  In this context, a complete
classification with respect to decidability of coverability, and of
the more difficult \emph{reachability problem}, was provided by Finkel
and Sutre~\cite{FinkelSutre2000}, 
\begin{table}[th]
\centering
\caption{Decidability status of the coverability and reachability problems in extensions of two-dimensional VASS; our contribution is indicated in bold.\label{tab-decidability}}
    \newcolumntype{Y}{>{\raggedright\arraybackslash}X}
    \definecolor{llg}{cmyk}{0,0,0,0.04}
    \begin{subtable}{0.45\textwidth}
    \centering
    \caption{Coverability problem.\label{tab-cov}}
    \begin{tabularx}{\textwidth}{YYYYY|Y}
      $\Nctr$         & $\NctrReset$         & $\NctrTransfer$         & $\NctrZero$             & $\PD$                           &\\
      \hline
      D~\cite{karp69} & D~\cite{arnold78}    & D~\cite{dufourd98}      & D~\cite{reinhardt08}    & D~\cite{leroux2015coverability} & $\Nctr$\\
      \cellcolor{llg} & D~\cite{arnold78}    & D~\cite{dufourd98}      & D~\cite{FinkelSutre2000}& \textbf{U}                      & $\NctrReset$ \\
      \cellcolor{llg} & \cellcolor{llg}      & D~\cite{dufourd98}      & U~\cite{FinkelSutre2000}& U~\cite{FinkelSutre2000}        & $\NctrTransfer$ \\
      \cellcolor{llg} & \cellcolor{llg}      & \cellcolor{llg}         & U~\cite{Minsky1961}     & U~\cite{Minsky1961}             & $\NctrZero$ \\
      \cellcolor{llg} & \cellcolor{llg}      & \cellcolor{llg}         & \cellcolor{llg}         & U                               & $\PD$
    \end{tabularx}
    \end{subtable}\quad
    \begin{subtable}{0.45\textwidth}
    \caption{Reachability problem.\label{tab-reach}}
    \centering
    \begin{tabularx}{\textwidth}{YYYYY|Y}
      $\Nctr$           & $\NctrReset$            & $\NctrTransfer$         & $\NctrZero$             & $\PD$                           &\\
      \hline
      D~\cite{leeuwen74}& D~\cite{reinhardt08}    & D~\cite{reinhardt08}    & D~\cite{reinhardt08}    & ??                              & $\Nctr$\\
      \cellcolor{llg}   & D~\cite{FinkelSutre2000}& D~\cite{FinkelSutre2000}& D~\cite{FinkelSutre2000}& \textbf{U}                      & $\NctrReset$ \\
      \cellcolor{llg}   & \cellcolor{llg}         & D~\cite{FinkelSutre2000}& U~\cite{FinkelSutre2000}& U~\cite{FinkelSutre2000}        & $\NctrTransfer$ \\
      \cellcolor{llg}   & \cellcolor{llg}         & \cellcolor{llg}         & U~\cite{Minsky1961}     & U~\cite{Minsky1961}             & $\NctrZero$ \\
      \cellcolor{llg}   & \cellcolor{llg}         & \cellcolor{llg}         & \cellcolor{llg}         & U                               & $\PD$
    \end{tabularx}
\end{subtable}
\end{table}%
 whether one uses plain counters
($\Nctr$), counters with resets~($\NctrReset$), counters whose
contents can be transferred to the other counter~($\NctrTransfer$), or
counters with zero tests~($\NctrZero$); see \cref{tab-decidability}.
In particular, two-dimensional VASS with one counter extended to allow
resets and one extended to allow zero tests have a decidable
reachability problem~\cite{FinkelSutre2000}: put differently, the
coverability problem for one-dimensional PVASS \emph{with resets}
($1$-PRVASS) is decidable if the stack alphabet is of the
form~$\{a,\bot\}$ where~$\bot$ is a distinguished bottom-of-stack
symbol.

\subsubsection*{Contributions.}
In this note, we show that Finkel and Sutre's decidability result does
not generalise to one-dimensional pushdown VASS with resets over an
arbitrary finite stack alphabet.
\begin{theorem}\label{thm}
  The coverability problem for $1$-PRVASS is undecidable.
\end{theorem}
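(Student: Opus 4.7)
The plan is to reduce from the halting problem for deterministic two-counter Minsky machines, which is undecidable. Given such a machine $M$, we encode a configuration $(q, c_1, c_2)$ of $M$ as a 1-PRVASS configuration whose stack is $a^n \# w \bot$, where $n = 2^{c_1} 3^{c_2}$ and $w \in \{m_f, d_f, t_f : f \in \{2,3\}\}^*$ records the history of operations performed so far. An increment of $c_1$ becomes multiplication of $n$ by $2$, a decrement becomes division by $2$, and a zero-test of $c_1$ becomes a modular test that $n \equiv 1 \pmod 2$; the analogous operations on $c_2$ use the factor $3$.

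Each such operation is implemented by one of the six gadgets defined at the start of the paper. The multiplication gadget pops each $a$ from the top of the stack while incrementing the VASS counter by $f$, writes an $m_f$ marker below the new top-of-stack $\#$, then pushes $a$'s back while decrementing the counter before resetting it; the division gadget is symmetric; and the test gadget pops $a$'s in chunks of $f$ plus a final residue $g$, requiring $\#$ to appear exactly then, and so it can fire only when $n \equiv g \pmod f$. The single 1-PRVASS counter serves only as scratch space for these operations and is cleared between gadgets by the reset instruction. However, each gadget is non-deterministic and lossy: the multiplication gadget can end with any value $n' \le fn$ rather than exactly $fn$, and the reset discards any leftover counter content, so a freely running simulation does not faithfully mimic $M$. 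To compel faithfulness we invoke the extra-states gadget: once the simulation halts, the system enters a verification phase at state $b$ that repeatedly chooses an inverse multiplication, division, or test gadget matching the topmost marker and undoes that operation on the stack; the run is accepted only when the transition from $b$ to $t'$ fires, which pops exactly $a \# \bot$, so the verification must have reduced the stack to $a \# \bot$ and thereby recovered the initial value $n = 1$.

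The main obstacle is to show that this end-point constraint forces every forward and backward gadget to have been executed exactly. The argument is a monotonicity invariant: the composition of a forward multiplication by $f$ and its corresponding inverse maps $n$ to some $n'' \le n$, with equality if and only if both gadgets are executed without loss, because the inverse gadget requires $f$ to divide its input (via the loop $\bar{a}^f\iinc$) and then scales down by $f$. Iterating this observation along the entire trace, since the simulation starts with $n = 1$ and the verification must end with $n = 1$, no step along the way may incur any loss. Consequently the simulated Minsky execution is exact, and coverability of $t'$ in the constructed 1-PRVASS holds if and only if $M$ halts, giving the claimed undecidability.
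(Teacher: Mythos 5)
Your proposal is correct and follows essentially the same route as the paper: Gödel-encode the two counters as $2^{c_1}3^{c_2}$ on the stack, implement multiplication, division and divisibility tests weakly with the reset counter as scratch space, record the operation sequence on the stack, and force exactness by replaying the inverses backward between two checks that the encoded value equals $1$. Your exactness argument, phrased as a nested ``forward gadget, inner part, matching inverse gadget'' composition being non-increasing with equality only under loss-free execution, is a repackaging of the paper's \cref{monotone-pairs} and \cref{two-approximations} on weak forward/backward relations of strictly monotone relations, and it works because the trace has exactly that palindromic structure.
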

\noindent
As far as the coverability problem is concerned, this fully determines
the decidability status in extensions of two-dimensional VASS where
one may also replace counters by pushdown stacks~($\PD$); see
\cref{tab-cov}.

Technically, the proof of \cref{thm} presented in \cref{sec-red}
reduces from the reachability problem in two-counter Minsky machines.
The reduction relies on the ability to \emph{weakly
  implement}~\cite{mayr81} basic operations---like multiplication by a
constant---and their inverses---like division by a constant.  This in
itself would not bring much; for instance, plain two-dimensional VASS
can already weakly implement multiplication and division by constants.
The crucial point here is that, in a \mbox{$1$-PRVASS}, we can also
weakly implement the inverse of a \emph{sequence} of basic operations
performed by the system, by using the pushdown stack to record a sequence
of basic operations and later replaying it in reverse, and relying on
resets to ``clean-up'' between consecutive operations.  Note that
without resets, while PVASS are known to be able to weakly implement
Ackermannian functions already in dimension one~\cite{leroux14}, they
cannot weakly compute sublinear functions~\cite{leroux19}---like
iterated division by two, i.e., logarithms.

\section{Pushdown Vector Addition Systems with Resets}\label{sec-pvassr}
A \emph{($1$-dimensional) pushdown vector addition system with resets
  ($1$-PRVASS)} is a tuple $\cV=(Q,\Gamma,A)$, where $Q$ is a finite
set of \emph{states}, $\Gamma$ is a finite set of \emph{stack symbols}, and
$A\subseteq Q\times I^*\times Q$ is a finite set of
\emph{actions}.  Here, transitions are labelled by finite
sequences of \emph{instructions} from
$I\eqdef\Gamma\cup\bar{\Gamma}\cup \{\iinc,\idec,\ireset\}$ where
$\bar\Gamma\eqdef\{\bar z\mid z\in\Gamma\}$ is a disjoint copy
of~$\Gamma$.

\medskip
A $1$-PRVASS defines a (generally infinite) transition system acting
over \emph{configurations} $(q,w,n)\in Q\times\Gamma^*\times\N$.  For
an instruction~$x\in I$, $w,w'\in\Gamma^\ast$, and $n,n'\in\+N$, we
write $(w,n)\autstepp{y}(w',n')$ in the following cases:
\begin{description}
\item[push] if $x=z$ for $z\in\Gamma$, then $w'=wz$ and $n'=n$,
\item[pop] if $x=\bar{z}$ for $z\in\Gamma$, then $w=w'z$ and $n'=n$,
\item[increment] if $x=\iinc$, then $w'=w$ and $n'=n+1$.
\item[decrement] if $x=\idec$, then $w'=w$ and $n'=n-1$, and
\item[reset] if $x=\ireset$, then $w'=w$ and $n'=0$.
\end{description}
Moreover, for a sequence of instructions $u=x_1\cdots x_k$ with
$x_1,\ldots,x_k\in I$, we have $(w_0,n_0)\autstepp{u}(w_k,n_k)$ if
for some $(w_1,n_1),\ldots,(w_{k-1},n_{k-1})\in\Gamma^*\times\N$, we have
$(w_i,n_i)\autstepp{x_i}(w_{i+1},n_{i+1})$ for all $0\leq i<k$.
Finally, for two configurations
$(q,w,n),(q',w',n')\in Q\times\Gamma^*\times\N$, we write
$(q,w,n)\autstep[\cV](q',w',n')$ if there is an action $(q,u,q')\in A$
such that $(w,n)\autstepp{u}(w',n')$.

\medskip
The \emph{coverability problem} for 1-PRVASS is the following decision
problem.
\begin{description}
\item[given] a 1-PRVASS $\cV=(Q,\Gamma,A)$, states $s,t\in Q$.
\item[question] are there $w\in\Gamma^*$ and $n\in\N$ with
  $(s,\varepsilon,0)\autsteps[\cV](t,w,n)$?
\end{description}

\section{Reduction from Minsky Machines}\label{sec-red}

\newcommand{\amult}[1]{\mathcal{M}_{#1}}
\newcommand{\ainvmult}[1]{\bar{\mathcal{M}}_{#1}}
\newcommand{\adiv}[1]{\mathcal{D}_{#1}}
\newcommand{\ainvdiv}[1]{\bar{\mathcal{D}}_{#1}}
\newcommand{\atest}[1]{\mathcal{T}_{#1}}
\newcommand{\ainvtest}[1]{\bar{\mathcal{T}}_{#1}}

We present in this section a reduction from reachability in
two-counter Minsky machines to coverability in 1-PRVASS.

\subsection{Preliminaries}
Recall that a \emph{two-counter (Minsky) machine} is a tuple
$\cM=(Q,A)$, where $Q$ is a finite set of states and
$A\subseteq Q\times \{0,1\}\times \{\iinc,\idec,\izero\}\times Q$ a
set of actions.  A \emph{configuration} is a now triple $(q,n_0,n_1)$
with $q\in Q$ and $n_0,n_1\in\N$.  We write
$(q,n_0,n_1)\autstep[\cM](q',n'_0,n'_1)$ if there is an action
$(q,c,x,q')\in A$ such that $n'_{1-c}=n_{1-c}$ and
\begin{description}
\item[increment] if $x=\iinc$, then $n'_c=n_c+1$,
\item[decrement] if $x=\idec$, then $n'_c=n_c-1$, and
\item[zero test] if $x=\izero$, then $n'_c=n_c=0$.
\end{description}
The \emph{reachability problem} for two-counter machines is the
following undecidable decision problem~\cite{Minsky1961}.
\begin{description}
\item[given] a two-counter machine $\cM=(Q,A)$, and states $s,t\in Q$.
\item[question] does $(s,0,0)\autsteps[\cM](t,0,0)$ hold?
\end{description}

\subsubsection*{G\"odel Encoding.}
The first ingredient of the reduction is to use the well-known
encoding of counter values $(n_0,n_1)\in\N\times\N$ as a single number
$2^{n_0}3^{n_1}$; for instance, the pair $(0,0)\in\N\times\N$ is
encoded by $2^03^0=1$.  In this encoding, incrementing the first
counter means multiplying by~$2$, decrementing the second counter
means dividing by~$3$, and testing the second counter for zero means
verifying that the encoding is not divisible by~$3$, etc.  Note that,
in each case, we encode the instruction as a partial function
$g\colon \N\nrightarrow\N$; let us we define its \emph{graph} as the
binary relation $R\eqdef\{(m,n)\in\N\times\N \mid \text{$g$ is defined
  on $m$ and $g(m)=n$}\}$.  Thus the encoded instructions are the
partial functions with the following graphs:
\begin{align*}
  {\mathbin{R_{m_f}}}&\eqdef\{(n,f\cdot n) \mid n\in\N\} & &\text{for multiplication,}\\
  R_{d_f}&\eqdef\{(f\cdot n, n) \mid n\in\N\} & & \text{for division, and} \\
  R_{t_f}&\eqdef\{(n,n) \mid n\not\equiv 0\bmod{f} \} & & \text{for the divisibility test,}
\end{align*}
for a factor $f\in\{2,3\}$.  This means that we can equivalently see
\begin{itemize}
\item a two-counter machine with distinguished source and target
  states~$s$ and~$t$ as a regular language $M\subseteq\Delta^*$ over
  the alphabet $\Delta\eqdef\{m_f, d_f, t_f\mid f\in\{2,3\}\}$, and
\item reachability as the existence of a word $u=x_1\cdots x_\ell$ in
  the language~$M$, with $x_1,\dots,x_\ell\in\Delta$, such that the pair
  $(1,1)$ belongs to the composition
  $R_{x_1}R_{x_2}\cdots R_{x_\ell}$.
\end{itemize}

\subsubsection*{Weak Relations.} Here, the problem is that it does
not seem possible to implement these operations (multiplication,
division, divisibility test) directly in a $1$-PRVASS.  Therefore, a
key idea of our reduction is to perform the instructions of $u$
\emph{weakly}---meaning that the resulting value may be smaller than
the correct result---but \emph{twice}: once forward and once backward.
More precisely, for any relation $R\subseteq \N\times\N$, we define
the \emph{weak forward} and \emph{backward} relations $\rdown{R}$ and $\ldown{R}$ by
\begin{align*}
  \rdown{R}&\eqdef\{(m,n)\in\N\times\N \mid \exists \tilde{n}\ge n\colon (m,\tilde{n})\in R\} \\
  \ldown{R}&\eqdef\{(m,n)\in\N\times\N \mid \exists \tilde{m}\ge m\colon (\tilde{m},n)\in R\}.
\end{align*}
Let us call a relation $R\subseteq\N\times\N$ \emph{strictly monotone}
if for $(m,n)\in R$ and $(m',n')\in R$, we have $m<m'$ if and only if
$n<n'$.  We shall rely on the following
\lcnamecref{two-approximations}, which is proven in
\cref{proof-two-app}.
\begin{restatable}{proposition}{twoapprox}\label{two-approximations}
  If $R_1,\ldots,R_\ell\subseteq\N\times\N$ are strictly monotone relations, then
  $R_1R_2\cdots R_\ell=\rdown{R_1}\rdown{R_2}\cdots\rdown{R_\ell}\cap
  \ldown{R_1}\ldown{R_2}\cdots\ldown{R_\ell}$.
\end{restatable}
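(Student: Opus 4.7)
The plan is to prove the two inclusions separately. The inclusion $R_1R_2\cdots R_\ell \subseteq \rdown{R_1}\cdots\rdown{R_\ell} \cap \ldown{R_1}\cdots\ldown{R_\ell}$ is immediate, since $R \subseteq \rdown{R}$ and $R \subseteq \ldown{R}$ (take $\tilde{n}=n$ and $\tilde{m}=m$ in the respective definitions), and composition preserves inclusion. The interesting direction is the reverse inclusion.

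Suppose $(m,n)$ lies in both weakened compositions. Unfolding the definitions yields \emph{forward witnesses} $m=a_0,a_1,\ldots,a_\ell=n$ together with $\tilde{a}_1,\ldots,\tilde{a}_\ell$ such that $(a_{i-1},\tilde{a}_i)\in R_i$ and $\tilde{a}_i \geq a_i$, and \emph{backward witnesses} $m=b_0,b_1,\ldots,b_\ell=n$ together with $\tilde{b}_0,\ldots,\tilde{b}_{\ell-1}$ such that $(\tilde{b}_{i-1},b_i)\in R_i$ and $\tilde{b}_{i-1}\geq b_{i-1}$. The goal is to extract an exact chain $m=c_0,c_1,\ldots,c_\ell=n$ with $(c_{i-1},c_i)\in R_i$ for every $i$.

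The argument is a pinching via a double induction, using strict monotonicity as a rigidity condition. First, by forward induction on $i$, I would show $a_i \leq b_i$: the base case $a_0=m=b_0$ is trivial, and for the step the pairs $(a_{i-1},\tilde{a}_i)$ and $(\tilde{b}_{i-1},b_i)$ lie in $R_i$ with $a_{i-1}\leq b_{i-1}\leq \tilde{b}_{i-1}$, so strict monotonicity of $R_i$ forces $\tilde{a}_i \leq b_i$, and hence $a_i \leq \tilde{a}_i \leq b_i$. Second, by backward induction from $i=\ell$ downward, I would establish the stronger equalities $a_i=b_i=\tilde{a}_i$. For $i=\ell$ this follows from $a_\ell=n=b_\ell$ together with the sandwich $a_\ell \leq \tilde{a}_\ell \leq b_\ell$. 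For the step, assuming $a_i=b_i=\tilde{a}_i$, the pairs $(a_{i-1},a_i)$ and $(\tilde{b}_{i-1},a_i)$ both lie in $R_i$ with equal second components, so strict monotonicity forces $a_{i-1}=\tilde{b}_{i-1}$; combined with $a_{i-1}\leq b_{i-1}\leq \tilde{b}_{i-1}$ this collapses to $a_{i-1}=b_{i-1}$, and then $a_{i-1}\leq \tilde{a}_{i-1}\leq b_{i-1}=a_{i-1}$ forces $\tilde{a}_{i-1}=a_{i-1}$, closing the induction. Setting $c_i:=a_i$ yields $(c_{i-1},c_i)=(a_{i-1},\tilde{a}_i)\in R_i$ for every $i$, establishing the desired factorisation.

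The main obstacle I expect is purely the bookkeeping: keeping the two layers of forward/backward witnesses straight and identifying at each step which strict-monotonicity comparison to invoke. Once the invariants of both inductions are laid out, each inductive step reduces to a single application of the defining property of strict monotonicity.
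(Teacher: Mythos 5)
Your proof is correct, and it takes a recognizably different route from the paper's. The paper first isolates an auxiliary lemma (\cref{monotone-pairs}) comparing an \emph{arbitrary} pair in $\rdown{R_1}\cdots\rdown{R_\ell}$ with an \emph{arbitrary} pair in $\ldown{R_1}\cdots\ldown{R_\ell}$ (if the right components satisfy $n'\le n$, resp.\ $n'<n$, then the left components satisfy $m'\le m$, resp.\ $m'<m$), and then argues by contradiction: if some step of the forward chain were inexact, the strict slack $\tilde{p}_i>p_i$ would propagate leftwards through both chains and yield $m<m$. You instead work directly with the two given witness chains: a forward induction gives $a_i\le\tilde{a}_i\le b_i$, and a backward induction from the pinned right endpoint collapses everything to $a_i=b_i=\tilde{a}_i$, so the forward witnesses themselves already form an exact chain. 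Both arguments rest on the same two consequences of strict monotonicity (non-strict order-preservation by contraposition, and equality of first components whenever second components agree), so neither is more general; what yours buys is a contradiction-free, constructive pinching argument that yields strictly more information (the weak forward and backward chains must coincide exactly), while the paper's formulation is more modular, since its comparison lemma does not require the two chains to share endpoints and is reused twice, on a prefix and a suffix. One cosmetic point in your backward induction: at the final step ($i-1=0$) the witness $\tilde{a}_0$ does not exist, but there you only need $a_0=b_0=m$, which is given, and your conclusion only uses $\tilde{a}_i=a_i$ for $i\ge 1$, so nothing is lost.
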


We shall thus construct in \cref{sec-rec} a $1$-PRVASS~$\cV$ in which
a particular state is reachable if and only if there exists a
word~$u\in M$ with $u=x_1\cdots x_\ell$ and
$x_1,\ldots,x_\ell\in\Delta$, such that $(1,1)\in
\rdown{R_{x_1}}\cdots \rdown{R_{x_\ell}}$ and $(1,1)\in
\ldown{R_{x_1}}\cdots \ldown{R_{x_\ell}}$.  Since the relations
$R_{m_f}$, $R_{d_f}$, and $R_{t_f}$ for $f\in\{2,3\}$ are strictly
monotone, \cref{two-approximations} guarantees that this is equivalent
to $(1,1)\in R_{x_1}\cdots R_{x_\ell}$.  Intuitively, if we make a
mistake in the forward phase $\rdown{R_{x_1}}\cdots
\rdown{R_{x_\ell}}$, then at some point, we produce a number
$n$ that is smaller than the correct result $\tilde n> n$.
Then, the backward phase cannot compensate for that, because it can
only make the results even smaller, and cannot reproduce the initial
value.

\subsection{Construction}\label{sec-rec}

We now describe the construction of our $1$-PRVASS~$\cV$.  Its stack
alphabet $\Gamma\eqdef\Delta\cup \{\bot,\#,a\}$. In~$\cV$, each
configuration will be of the form $(q,\bot w\#a^n, k)$, where
$w\in\Delta^*$, and $n,k\in\N$.  In the forward phase, we simulate the
run of the two-counter machine so that~$n$ is the G\"{o}del encoding
of the two counters.  In order to perform the backward phase, the
word~$w$ records the instruction sequence of the forward phase.  The
resettable counter is used as an auxiliary counter in each weak
computation step.

\subsubsection*{Gadgets.}
\begin{figure}[tbp]
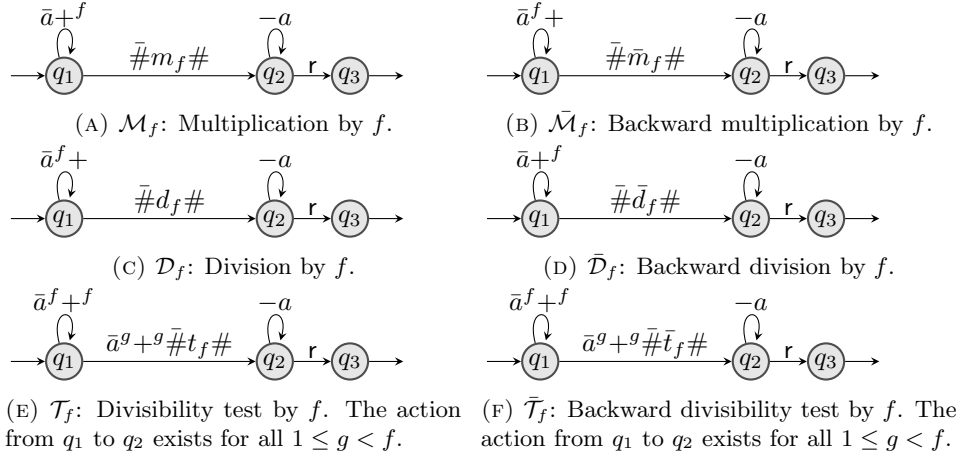

  \begin{subfigure}[t]{0.48\textwidth}
    \gadgetmultiplication
    \subcaption{$\amult{f}$: Multiplication by $f$.}
  \end{subfigure}\hfill
  \begin{subfigure}[t]{0.50\textwidth}
    \gadgetinversemultiplication
    \subcaption{$\ainvmult{f}$: Backward multiplication by $f$.}
  \end{subfigure}
  
  \begin{subfigure}[t]{0.48\textwidth}
    \gadgetdivision
    \subcaption{$\adiv{f}$: Division by $f$.}
  \end{subfigure}\hfill
  \begin{subfigure}[t]{0.50\textwidth}
    \gadgetinversedivision
    \subcaption{$\ainvdiv{f}$: Backward division by $f$.}
  \end{subfigure}
  
  \begin{subfigure}[t]{0.48\textwidth}
    \gadgettest
    \subcaption{$\atest{f}$: Divisibility test by $f$. The action
      from $q_1$ to $q_2$ exists for all $1\leq g<f$.}
  \end{subfigure}\hfill
  \begin{subfigure}[t]{0.50\textwidth}
    \gadgetinversetest
    \subcaption{$\ainvtest{f}$: Backward divisibility test by
      $f$. The action from $q_1$ to $q_2$ exists for all $1\leq g< f$.}
  \end{subfigure}
  \caption{Gadgets used in the reduction.}
  \label{gadgets}
\end{figure}
For each weak computation step, we use one of the gadgets from
\cref{gadgets}; note that, for instance, ``$+^f$'' denotes the
sequence of instructions $+\cdots +$ of length~$f$.  Observe that we
have:
\begin{align}
  (q_1,\bot u\#a^m,0)&\autsteps[\amult{f}](q_3,\bot v\#a^n,0) &
  &\text{iff} & &v=um_f \text{ and }(m,n)\in\rdown{R_{m_f}} \\
  (q_1,\bot u\#a^m,0)&\autsteps[\ainvmult{f}](q_3,\bot v\#a^n,0) &
  &\text{iff} & &u=vm_f\text{ and }(n,m)\in\ldown{R_{m_f}} %
\end{align}
and analogous facts hold for $\adiv{f}$ and $\ainvdiv{f}$ (with $d_f$
instead of $m_f$) and also for $\atest{f}$ and $\ainvtest{f}$ (with
$t_f$ instead of $m_f$).  Let us explain this in the case
$\amult{f}$. In the loop at $q_1$, $\amult{f}$ removes $a$ from the
stack and adds $f$ to the auxiliary counter. When $\#$ is on top of
the stack the automaton moves to $q_2$ and changes the stack from
$\bot u\#$ to $\bot um_f\#$.  Therefore, once $\amult{f}$ is in $q_2$,
it has set the counter to $f\cdot m$. In the loop at $q_2$, it
decrements the counter and pushes $a$ onto the stack before it resets
the counter and moves to $q_3$. Thus, in state $q_3$, we have
$0\leq n\leq f\cdot m$.

\subsubsection*{Main Control.}
Let $M\subseteq\Delta^*$ be accepted by the finite automaton
$\cA=(\Delta,Q,A,s,t)$. Schematically, our $1$-PRVASS~$\cV$ is
structured as in the following diagram:
\[ \extrastates\]
The part in the dashed rectangle is obtained from~$\cA$ as follows.
Whenever there is an action $(q,m_f,q')$ in~$\cA$, we glue in a fresh
copy of~$\amult{f}$ between~$q$ and~$q'$, including
$\varepsilon$-actions from~$q$ to~$q_1$ and from~$q_3$ to~$q'$.  The
original action $(q,m_f,q')$ is removed.  We proceed analogously for
actions $(q,d_f,q')$ and $(q,t_f,q')$, where we glue in fresh copies
of~$\adiv{f}$ and~$\atest{f}$, respectively.  Clearly, the part in the
dashed rectangle realizes the forward phase as described above.

Once it reaches~$t$, $\cV$ can check if the current number stored on
the stack equals~$1$ and if so, move to state~$b$.  In state~$b$, the
backward phase is implemented.  The $1$-PRVASS~$\cV$ contains a copy of
$\ainvmult{f}$, $\ainvdiv{f}$, and $\ainvtest{f}$ for each
$f\in\{2,3\}$.  Each of these copies can be entered from~$b$ and goes
back to~$b$ when exited.

Finally, the stack is emptied by an action from~$b$ to~$t'$, which can
be taken if and only if the stack content is~$\bot\# a$.  We can check
that from $(s',\varepsilon,0)$, one can reach a configuration
$(t',w,m)$ with $w\in\Gamma^*$ and $m\in\N$, if and only if there exists
$u\in M$, $u=x_1\cdots x_\ell$, and $x_1,\ldots,x_\ell\in\Delta$, with
$(1,1)\in \rdown{R_{x_1}}\cdots \rdown{R_{x_\ell}}\cap
\ldown{R_{x_1}}\cdots \ldown{R_{x_\ell}}$.
According to \cref{two-approximations}, the latter is equivalent to
$(1,1)\in R_{x_1}\cdots R_{x_\ell}$.

\section{Concluding Remarks}

In this note, we have proven the undecidability of coverability in
one-dimensional pushdown VASS with resets (c.f.\ \cref{thm}).  The
only remaining open question in \cref{tab-decidability} regarding
extensions of two-dimensional VASS is a long-standing one, namely the
reachability problem for one-dimensional PVASS.  Another fruitful
research avenue is to pinpoint the exact complexity in the decidable
cases of \cref{tab-decidability}. Here, not much is known except
regarding coverability and reachability in two-dimensional VASS: these
problems are \PSPACE-complete if counter updates are encoded in
binary~\cite{blondin15} and \NL-complete if updates are encoded in
unary~\cite{EnglertLT16}.

\appendix

\newcommand{\vge}{\rotatebox[origin=c]{-90}{$\ge$}}
\newcommand{\vgt}{\rotatebox[origin=c]{-90}{$>$}}
\newcommand{\veq}{\rotatebox[origin=c]{-90}{$=$}}

\section{Proof of \Cref{two-approximations}}\label{proof-two-app}
It remains to prove \cref{two-approximations}.  We will use the
following lemma.

\begin{lemma}\label{monotone-pairs}
  Let $R_1,\ldots,R_\ell\subseteq\N\times\N$ be strictly monotone
  relations and $(m,n)\in \rdown{R_1}\cdots\rdown{R_\ell}$ and
  $(m',n')\in \ldown{R_1}\cdots\ldown{R_\ell}$.  If $n'\le n$, then
  $m'\le m$. Moreover, if $n'<n$, then $m'<m$.
\end{lemma}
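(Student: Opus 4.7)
The plan is to prove \cref{monotone-pairs} by induction on $\ell$. The base case $\ell=0$ is immediate, since the empty product of relations is the identity on $\N$, forcing $m=n$ and $m'=n'$, so that both implications are trivial.

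For the inductive step, I would first unpack the definitions at the outermost stage to extract intermediate values $m_1, m'_1\in\N$ with $(m,m_1)\in\rdown{R_1}$, $(m_1,n)\in\rdown{R_2}\cdots\rdown{R_\ell}$, $(m',m'_1)\in\ldown{R_1}$, and $(m'_1,n')\in\ldown{R_2}\cdots\ldown{R_\ell}$, together with witnesses $p\ge m_1$ satisfying $(m,p)\in R_1$ and $q\ge m'$ satisfying $(q,m'_1)\in R_1$. Applying the induction hypothesis to the tail $R_2,\ldots,R_\ell$ with the pair $(m_1,n)$ on the forward side and $(m'_1,n')$ on the backward side then yields: if $n'\le n$ then $m'_1\le m_1$, and if $n'<n$ then $m'_1<m_1$.

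It then remains to propagate these inequalities backward through the single relation $R_1$. For the non-strict implication, I suppose for contradiction that $m'>m$; then $q\ge m'>m$, and strict monotonicity of $R_1$ applied to $(m,p),(q,m'_1)\in R_1$ forces $p<m'_1$, contradicting $m'_1\le m_1\le p$. For the strict implication, I suppose $m'\ge m$: if $q>m$ the same argument gives $p<m'_1<m_1\le p$, a contradiction, so the delicate subcase is $q=m$ (which forces $m'=m$). Here $(m,p),(m,m'_1)\in R_1$ combined with strict monotonicity applied in both directions---since $m\not<m$ rules out $p<m'_1$ and $m\not>m$ rules out $p>m'_1$---forces $p=m'_1$, whence $p\ge m_1>m'_1=p$ is absurd.

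The main subtlety, and the only place where more than a routine induction is needed, is the boundary subcase $q=m$ in the strict implication: this is exactly where strict monotonicity (rather than mere monotonicity) is essential, since what one needs is that $R_1$ is single-valued on pairs with coinciding first coordinates, and this is a direct consequence of strict monotonicity but would fail under a weaker hypothesis.
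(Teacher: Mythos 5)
Your proof is correct and follows essentially the same route as the paper: the paper proves the case $\ell=1$ via the witnesses $(m,\tilde n),(\tilde m,n')\in R_1$ and strict monotonicity, and asserts that the general case follows by induction, which is exactly the induction you carry out explicitly by peeling off $R_1$ and applying the hypothesis to the tail $R_2,\ldots,R_\ell$. Your single-step propagation (by contradiction, with the case split on $q=m$ versus $q>m$) is just a rephrasing of the paper's direct argument that $m'_1\le m_1\le p$ (resp.\ $m'_1<m_1\le p$) forces $q\le m$ (resp.\ $q<m$) by strict monotonicity, so no new idea is needed there.
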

\begin{proof}
  It suffices to prove the \lcnamecref{monotone-pairs} in the case
  $\ell=1$: then, the general version follows by induction.  Let
  $(m,n)\in \rdown{R_1}$ and $(m',n')\in\ldown{R_1}$.  Then there are
  $\tilde{n}\ge n$ with $(m,\tilde{n})\in R_1$ and $\tilde{m}\ge m'$
  with $(\tilde{m},n')\in R_1$.  If $n'<n$, then we have the following
  relationships:
  \[
  \begin{matrix}
    m & R_1 & \tilde{n} \\
    &   & \vge      \\
    &   & n \\
    &   & \vgt \\
    \tilde{m}& R_1  & n' \\
    \vge   &   & \\
    m'     &    & 
  \end{matrix} 
\]
Since $R_1$ is strictly monotone, this implies $\tilde{m}<m$ and thus $m'<m$. The case
$n'\le n$ follows by the same argument.\myqed
\end{proof}

We are now ready to prove \cref{two-approximations}.

\twoapprox*
\begin{proof}
  Of course, for any relation $R\subseteq\N\times\N$, one has
  $R\subseteq \rdown{R}$ and $R\subseteq\ldown{R}$. In particular,
  $R_1R_2\cdots R_\ell$ is included in both
  $\rdown{R_1}\rdown{R_2}\cdots\rdown{R_\ell}$ and
  $\ldown{R_1}\ldown{R_2}\cdots\ldown{R_\ell}$.

  For the converse inclusion, suppose
  $(m,n)\in \rdown{R_1}\rdown{R_2}\cdots\rdown{R_\ell}\cap
  \ldown{R_1}\ldown{R_2}\cdots\ldown{R_\ell}$. Then there are
  $p_0,\ldots,p_{\ell}\in\N$ with $p_0=m$, $p_\ell=n$, and
  $(p_{i-1},p_{i})\in \rdown{R_i}$ for $0< i\leq\ell$.  There are also
  $q_0,\ldots,q_{\ell}\in\N$ with $q_0=m$, $q_\ell=n$, and
  $(q_{i-1},q_{i})\in \ldown{R_i}$ for $0<i\leq\ell$.
  
  Towards a contradiction, suppose that $(p_{i-1},p_i)\notin R_i$ for some $0<i\leq\ell$.
  Then there is a $\tilde{p}_i>p_i$ with $(p_{i-1},\tilde{p}_i)\in R_i$. With this, we
  have
  \[  \begin{matrix}
      m=p_0 & \rdown{R_1}\cdots \rdown{R_i}& \tilde{p}_i & & \\
      &                                    & \vgt        & & \\
      &                                    & p_i & \rdown{R_{i+1}}\cdots \rdown{R_\ell} & p_\ell\\
      &                                    &     &                                   & \veq \\
      m=q_0 &  \ldown{R_1}\cdots\ldown{R_i}   & q_i & \ldown{R_{i+1}}\cdots\ldown{R_\ell}  & q_\ell
    \end{matrix}
  \]
  Since $p_\ell=q_\ell$, \cref{monotone-pairs} applied to
  $R_{i+1},\dots,R_\ell$ implies $q_i\le p_i$ and thus
  $q_i<\tilde{p}_i$.  Applying \cref{monotone-pairs} to
  $R_1,\dots,R_i$ then yields $q_0<p_0$, a contradiction.  Therefore, we
  have $(p_{i-1},p_i)\in R_i$ for every $0<i\leq\ell$ and thus
  $(m,n)\in R_1\cdots R_\ell$.\myqed
\end{proof}

\section*{Acknowledgements}

Work partially funded by
\href{http://bravas.labri.fr/}{ANR-17-CE40-0028~\textsc{Bra\!VAS}}.

\printbibliography

\end{document}
